\newtheorem{theorem}{Theorem}[section]
\newtheorem{proposition}[theorem]{Proposition}
\theoremstyle{remark}
\newtheorem{remark}[theorem]{Remark}
\theoremstyle{definition}
\theoremstyle{example}
\newtheorem{example}[theorem]{Example}
\theoremstyle{notation}
\newcommand{\bra}[1]{\langle#1|}
\newcommand{\ket}[1]{|#1\rangle}
\begin{document}

\title{Selective correlations in finite quantum systems and the Desargues property}            
\author{C. Lei, A. Vourdas\\
Department of Computer Science\\University of Bradford\\ Bradford BD7 1DP, UK}

\begin{abstract}
The Desargues property is well known in the context of projective geometry.
An analogous property is presented in the context of both classical and Quantum Physics.
In a classical context, the  Desargues property
implies that two logical circuits with the same input, show in their outputs selective correlations.
In general their outputs are uncorrelated, but if the output of one has a particular value,
then the output of the other has another particular value. 
In a quantum context, the  Desargues property implies that two experiments each of which involves two successive projective measurements, have selective correlations. For a particular set of projectors,
if in one experiment the second measurement does not change the output of the first measurement, then the same is true in the other experiment.
\end{abstract}

\maketitle

\section{Introduction}

Quantum logic has been studied extensively in the literature,
after the fundamental work of Birkhoff and von Neumann\cite{LO1,LO2,LO3,LO4,LO5,LO6}.
It is described with the Birkhoff-von Neumann lattice
of the closed subspaces
of the Hilbert space, with the operations of conjunction, disjunction and complementation.
In the case of the infinite-dimensional Hilbert space $H_{\rm osc}$ of the harmonic oscillator, the Birkhoff-von Neumann lattice ${\cal L}(H_{\rm osc})$ is orthomodular.

We consider quantum systems with a $d$-dimensional Hilbert space $H(d)$\cite{vour,vou}. In this case all subspaces are closed, and the lattice of subspaces ${\cal L}(d)$ is 
a modular orthocomplemented lattice \cite{la1,la2,la3,la4,la5}.
The modular orthocomplemented lattice  ${\cal L}(d)$ obeys modularity (which is a weak version of distributivity), but the orthomodular lattice ${\cal L}(H_{\rm osc})$ violates modularity. Orthomodularity is a weaker concept than orthocomplemented modularity.
Both lattices ${\cal L}(H_{\rm osc})$ and ${\cal L}(d)$, violate distributivity.

It has been pointed out in \cite{A0,A1,A2,A3,A4} and in \cite{la3}, that the lattices describing finite quantum systems, are a special case of modular orthocomplemented lattices, because they have extra stronger properties.
One such property is the Desargues property which is fundamental in Projective Geometry\cite{pro}.
The analogue of this geometrical property is discussed here, in the context of Physics, as follows:
\begin{itemize}
\item
In a classical context, the  Desargues property is expressed in the language of Boolean algebra (proposition \ref{pro1}).
Based on this we present two logical circuits with {\em selective correlations}, in the sense when the output of one has a particular value,
then the output of the other has another particular value. But in general the two outputs are not correlated. 
\item
In a quantum context, the  Desargues property is expressed in terms of subspaces of $H(d)$ and the corresponding projectors (proposition \ref{pro2}).
Based on this we present two experiments, each of which involves two successive projective measurements.
If in one experiment the second measurement does not change the output of the first measurement, then the same is true in the other experiment.
This is true only for a particular set of projectors, defined by the Desargues property.
In this sense the two experiments show {\em selective correlations}. 
\end{itemize}

\section{The Desargues property in Boolean algebra and selective classical correlations}
Classical Physics is described with Boolean algebra\cite{B1,B2}.
In the powerset $2^S$ of a finite set $S$ (i.e., on the set of the subsets of $S$), we define 
the disjunction (logical OR), conjunction (logical AND) and complementation (logical NOT) operations as the union, intersection and complement of set theory, correspondingly:
\begin{eqnarray}
A\vee B=A \cup B;\;\;\;A\wedge B=A\cap B;\;\;\;\neg A=S\setminus A;\;\;\;A,B\subseteq S.
\end{eqnarray}
The powerset $2^S$ with these operations is a Boolean algebra. 
The least element is the empty set $0=\emptyset$, and the greatest element is ${I}=S$.
The partial order $\prec$ in this lattice is the subset $\subseteq$.

Most of the work with Boolean algebras considers the binary case, where $S=\{1\}$, and $0=\emptyset$ and $I=\{1\}$.
Work with bigger sets $S$ has been considered in \cite{S1} in a classical context, and \cite{S2,S3} in a quantum context.
Here $S$ is a general finite set.

\begin{figure}[p]
    \centering
    \includegraphics[width=0.6\textwidth]{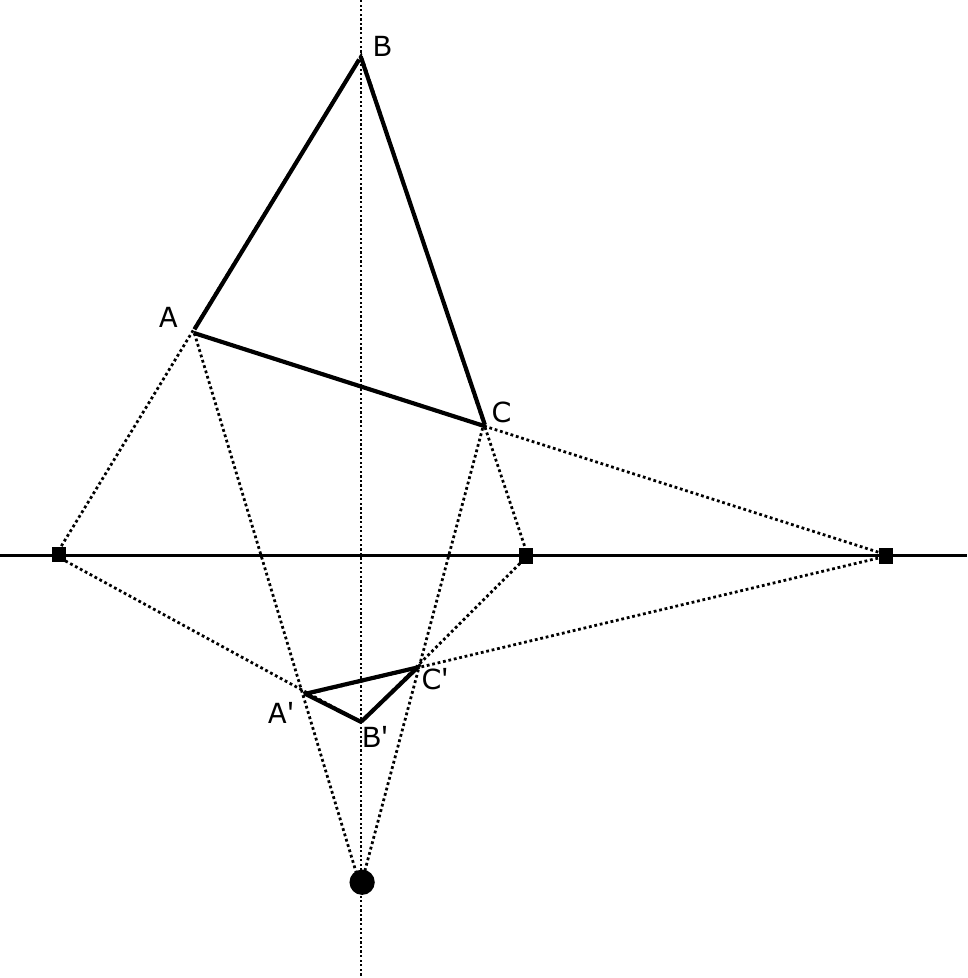}
    \caption{The Desargues property in projective geometry}
    \label{fig1}
\end{figure}

The Desargues property is fundamental in projective geometry\cite{pro}, and is shown in fig.\ref{fig1}.
Two triangles $ABC$ and $A^\prime B^\prime C^\prime$ are considered, such that the lines 
$AA^\prime$, $BB^\prime$, $CC^\prime$, intersect at the same point.
Then the three points which are the intersections of the lines $(AB, A^\prime B^\prime)$,
$(AC, A^\prime C^\prime)$, $(BC, B^\prime C^\prime)$, belong to the same line.
The analogue of this in Boolean algebra, is given in the following proposition.
\begin{proposition}\label{pro1}
Let $(A_1, A_2, A_3)$ and $(A_1^\prime, A_2^\prime , A_3^\prime)$ be two triplets of distinct subsets of $S$.
Also let 
\begin{eqnarray}
B_{ij}=A_i\vee A_j;\;\;\;B_{ij}^\prime=A_i^\prime\vee A_j^\prime;\;\;\;i\ne j
\end{eqnarray}
and 
\begin{eqnarray}
{\mathfrak B}_k=B_{ij}\wedge B _{ij}^\prime=(A_i\vee A_j)\wedge (A_i^\prime\vee A_j^\prime);\;\;\;\{i,j,k\}=\{1,2,3\}.
\end{eqnarray}

We also consider the 
\begin{eqnarray}
{\cal C}_{i}=A_i \vee A_i^\prime.
\end{eqnarray}

If ${\cal C}_1\wedge {\cal C}_2$ is a subset of ${\cal C}_3$, then ${\mathfrak B}_3$ is a subset of ${\mathfrak B}_1\vee {\mathfrak B}_2$.
\begin{eqnarray}\label{23}
{\cal C}_1\wedge {\cal C}_2\prec {\cal C}_3\;\;\rightarrow \;\;{\mathfrak B}_3 \prec {\mathfrak B}_1\vee {\mathfrak B}_2.
\end{eqnarray}
The converse is not true.
\end{proposition}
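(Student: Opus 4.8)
The plan is to work pointwise, using that in the powerset $2^S$ the operations $\vee,\wedge$ and the order $\prec$ are literally $\cup,\cap$ and $\subseteq$. Thus the target inclusion $\mathfrak{B}_3\prec\mathfrak{B}_1\vee\mathfrak{B}_2$ is an element chase: I fix an arbitrary $x\in\mathfrak{B}_3=(A_1\vee A_2)\wedge(A_1'\vee A_2')$ and show $x\in\mathfrak{B}_1\vee\mathfrak{B}_2$. Membership in $\mathfrak{B}_3$ splits into two independent disjunctions, ``$x\in A_1$ or $x\in A_2$'' and ``$x\in A_1'$ or $x\in A_2'$'', which I organise into four cases according to which unprimed and which primed set realises $x$.

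The two ``aligned'' cases are immediate and do not use the hypothesis. If $x\in A_1$ and $x\in A_1'$ then $x\in(A_1\vee A_3)\wedge(A_1'\vee A_3')=\mathfrak{B}_2$, since $A_1\prec A_1\vee A_3$ and $A_1'\prec A_1'\vee A_3'$; symmetrically $x\in A_2$ and $x\in A_2'$ give $x\in(A_2\vee A_3)\wedge(A_2'\vee A_3')=\mathfrak{B}_1$. That the hypothesis is untouched here already locates the real content in the remaining cases.

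The main obstacle is the two ``crossed'' cases, where the direct argument fails and the Desargues hypothesis must be invoked. Take $x\in A_1$ and $x\in A_2'$. Then $x\in A_1\prec\mathcal{C}_1$ and $x\in A_2'\prec\mathcal{C}_2$, so $x\in\mathcal{C}_1\wedge\mathcal{C}_2$, and the assumption $\mathcal{C}_1\wedge\mathcal{C}_2\prec\mathcal{C}_3$ forces $x\in\mathcal{C}_3=A_3\vee A_3'$. Now a subcase split closes the argument: if $x\in A_3$ then, together with $x\in A_2'$, we get $x\in(A_2\vee A_3)\wedge(A_2'\vee A_3')=\mathfrak{B}_1$; if $x\in A_3'$ then, together with $x\in A_1$, we get $x\in(A_1\vee A_3)\wedge(A_1'\vee A_3')=\mathfrak{B}_2$. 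The fourth case $x\in A_2$, $x\in A_1'$ is identical after exchanging the indices $1\leftrightarrow 2$. As every case lands in $\mathfrak{B}_1\vee\mathfrak{B}_2$, the implication follows; the delicate point is simply keeping the bookkeeping of primed and unprimed memberships straight, so that both factors of the meet defining $\mathfrak{B}_1$ or $\mathfrak{B}_2$ are actually secured.

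Finally, to see that the converse fails I would exhibit a small counterexample. A clean choice is to take the two triplets equal, $A_i'=A_i$: then $\mathcal{C}_i=A_i$, so the left-hand condition collapses to $A_1\wedge A_2\prec A_3$, while $\mathfrak{B}_3=A_1\vee A_2$ and $\mathfrak{B}_1\vee\mathfrak{B}_2=A_1\vee A_2\vee A_3$, so the right-hand inclusion $\mathfrak{B}_3\prec\mathfrak{B}_1\vee\mathfrak{B}_2$ holds automatically. Choosing, say, $A_1=\{1,2\}$, $A_2=\{1,3\}$, $A_3=\{2,3\}$ in $S=\{1,2,3\}$ makes $A_1\wedge A_2=\{1\}\not\prec A_3$, so the conclusion holds while the premise does not, which is exactly what is required.
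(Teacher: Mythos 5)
Your proof is correct, but it takes a genuinely different route from the paper's. You argue pointwise: since in $2^S$ the operations are literally $\cup$, $\cap$, $\subseteq$, you chase an element $x\in\mathfrak{B}_3$ through four cases, the two ``aligned'' cases landing in $\mathfrak{B}_2$ or $\mathfrak{B}_1$ unconditionally and the two ``crossed'' cases being exactly where the hypothesis $\mathcal{C}_1\wedge\mathcal{C}_2\prec\mathcal{C}_3$ enters; all case verifications check out against the index conventions $\mathfrak{B}_1=(A_2\vee A_3)\wedge(A_2'\vee A_3')$ and $\mathfrak{B}_2=(A_1\vee A_3)\wedge(A_1'\vee A_3')$. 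The paper instead stays at the algebraic level: it complements both sides (De Morgan), then massages the complemented inequality using distributivity together with the monotonicity $A\prec B\Rightarrow A\wedge C\prec B\wedge C$. What the paper's route buys is the explicit remark that the argument hinges on distributivity, which fails in the modular lattices of the quantum sections --- a conceptual point your element chase obscures, since reasoning via membership of points is itself the hallmark of the Boolean/distributive setting; what your route buys is transparency and a precise localisation of where the Desargues hypothesis is actually needed. On the converse, your treatment is in fact stronger than the paper's: the paper only observes that the step $A\wedge C\prec B\wedge C\not\Rightarrow A\prec B$ is not reversible, which shows that the proof method does not reverse rather than exhibiting an actual failure of the converse, whereas you produce an explicit counterexample. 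One small caveat: your choice $A_i'=A_i$ makes the two triplets coincide, which sits at the edge of the hypothesis ``two triplets of distinct subsets'' if that is read as requiring the six sets to be pairwise distinct; the degeneracy is easily removed, e.g. $A_1=\{1\}$, $A_2=\{2\}$, $A_3=\{3\}$, $A_1'=\{1,2\}$, $A_2'=\{2,3\}$, $A_3'=\emptyset$ in $S=\{1,2,3\}$ gives $\mathcal{C}_1\wedge\mathcal{C}_2=\{2\}\not\subseteq\{3\}=\mathcal{C}_3$ while $\mathfrak{B}_3=\{1,2\}\subseteq\{1,2,3\}=\mathfrak{B}_1\vee\mathfrak{B}_2$, with all six subsets distinct.
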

\begin{proof}
We will work with the negation of these statements.
We assume that ${\cal C}_1\wedge {\cal C}_2\prec {\cal C}_3$. The negation of both sides of this assumption, gives
\begin{eqnarray}\label{61}
{\cal C}_1\wedge {\cal C}_2\prec {\cal C}_3\;\;\rightarrow \;\;
(\neg A_1\wedge \neg A_1^\prime) \vee (\neg A_2\wedge \neg A_2^\prime) \succ \neg A_3 \wedge \neg A_3^\prime.
\end{eqnarray}
We want to prove that
\begin{eqnarray}
&&{\mathfrak B}_3 \prec {\mathfrak B}_1\vee {\mathfrak B}_2\;\;\rightarrow \;\;\nonumber\\
&&(\neg A_1\wedge \neg A_2)\vee (\neg A_1^\prime\wedge \neg A_2^\prime)\succ
[(\neg A_1\wedge \neg A_3)\vee (\neg A_1^\prime\wedge \neg A_3^\prime)]\wedge 
[(\neg A_2\wedge \neg A_3)\vee (\neg A_2^\prime\wedge \neg A_3^\prime)]
\end{eqnarray}
We rewrite this as
\begin{eqnarray}
&&(\neg A_1\wedge \neg A_2)\vee (\neg A_1^\prime\wedge \neg A_2^\prime)\succ\nonumber\\&&
(\neg A_1\wedge \neg A_2\wedge \neg A_3)\vee (\neg A_1^\prime\wedge  \neg A_2^\prime \wedge \neg A_3^\prime)\vee
\{[(\neg A_1^\prime\wedge \neg A_2)\vee  (\neg A_2^\prime\wedge \neg A_1)]\wedge (\neg A_3\wedge \neg A_3 ^\prime)\} 
\end{eqnarray}
Using the assumption of Eq.(\ref{61}), and the fact that $A\prec B$ implies that $A\wedge C\prec B\wedge C$ (the converse of which is not true), we rewrite this as
\begin{eqnarray}
&&(\neg A_1\wedge \neg A_2)\vee (\neg A_1^\prime\wedge \neg A_2^\prime)\succ\nonumber\\&&
[\neg A_1\wedge \neg A_2\wedge (\neg A_3 \vee \neg A_1^\prime\vee \neg A_2^\prime)]\vee
[\neg A_1^\prime\wedge \neg A_2^\prime\wedge (\neg A_3 ^\prime \vee\neg A_1\vee \neg A_2)]
\end{eqnarray}
which is clearly true. We stress that the proof relies on the distributivity property, which holds in Boolean algebra, but does not hold in 
the modular lattices below.
The converse of this proposition is not true, because  $A\wedge C\prec B\wedge C$ does not imply $A\prec B$.
\end{proof}

\begin{figure}
\centering
\begin{tikzpicture}[circuit logic US,
                    tiny circuit symbols,
                    every circuit symbol/.style={fill=white,draw, logic gate input sep=4mm}
]

\node [or gate, inputs = nn] at (0,0) (or1) {OR};
\node [or gate, inputs = nn] at ($(or1.south)+(0,2cm)$) (or2) {OR};
\node [or gate, inputs = nn] at ($(or2.south)+(0,2cm)$) (or3) {OR};
\node [and gate, inputs = nn, anchor=input 1] at ($(or3.south)+(2cm,0)$) (and1) {AND};
\node [or gate, inputs = nn] at ($(and1.south)+(3cm,0cm)$) (or4) {OR};
\draw (or1.input 1) -- ++(left:5mm) node[left] (A) {$A_3$};
\draw (or1.input 2) -- ++(left:5mm) node[left] {$A_3'$};
\draw (or2.input 1) -- ++(left:5mm) node[left] (A) {$A_2$};
\draw (or2.input 2) -- ++(left:5mm) node[left] {$A_2'$};
\draw (or3.input 1) -- ++(left:5mm) node[left] (A) {$A_1$};
\draw (or3.input 2) -- ++(left:5mm) node[left] {$A_1'$};

\draw (or3.output) -- ++(right:3mm) |- (and1.input 1) node[above left] {${\cal C}_1$};
\draw (or2.output) -- ++(right:3mm) |- (and1.input 2)  node[below left] {${\cal C}_2$};

\draw (and1.output) -- ++(right:3mm) |- (or4.input 1)  node[above left] {${\cal C}_1\wedge{\cal C}_2$};
\draw (or1.output) -- ++(right:3mm) node[below right] {${\cal C}_3$} |- (or4.input 2) ;

\draw (or4.output) -- ++(right:5mm);

\end{tikzpicture}
\caption{The left hand side of Eq.(\ref{23}) holds, when the output of this logical circuit is ${\cal C}_3=A_3 \vee A_3^\prime$. } \label{fig2}
\end{figure}
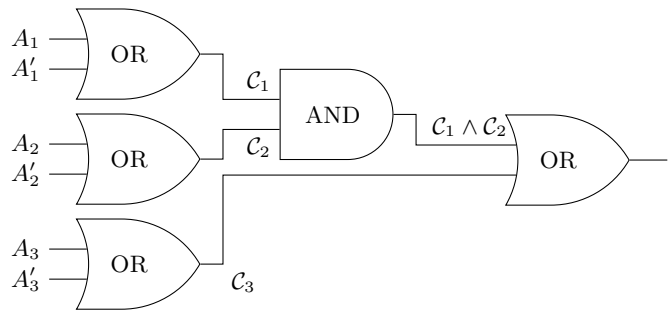 

\begin{figure}
\centering
\begin{tikzpicture}[circuit logic US,
                    tiny circuit symbols,
                    every circuit symbol/.style={fill=white,draw, logic gate input sep=4mm}
]

\node (A3p) at (0,0) {$A_3'$};
\node (A3) at ($(A3p.south) + (0,1.6cm)$) {$A_3$};
\node (A2p) at ($(A3.south) + (0,1.6cm)$) {$A_2'$};
\node (A2) at ($(A2p.south) + (0,1.6cm)$) {$A_2$};
\node (A1p) at ($(A2.south) + (0,1.6cm)$) {$A_1'$};
\node (A1) at ($(A1p.south) + (0,1.6cm)$) {$A_1$};

\node [or gate, inputs = nn] at (3cm,0) (or1) {OR};
\node [or gate, inputs = nn] at ($(or1.south)+(0,2cm)$) (or2) {OR};
\node [or gate, inputs = nn] at ($(or2.south)+(0,2cm)$) (or3) {OR};
\node [or gate, inputs = nn] at ($(or3.south)+(0,2cm)$) (or4) {OR};
\node [or gate, inputs = nn] at ($(or4.south)+(0,2cm)$) (or5) {OR};
\node [or gate, inputs = nn] at ($(or5.south)+(0,2cm)$) (or6) {OR};

\node [and gate, inputs = nn, anchor=input 1] at ($(or2.south)+(2cm,0)$) (and1) {AND};
\node [and gate, inputs = nn, anchor=input 1] at ($(or4.south)+(2cm,0)$) (and2) {AND};
\node [and gate, inputs = nn, anchor=input 1] at ($(or6.south)+(2cm,0)$) (and3) {AND};

\node [or gate, inputs = nn] at ($(and1.south)+(2.5cm,2cm)$) (or7) {OR};

\node [and gate, inputs = nn, anchor=input 1] at ($(or7.south)+(2.5cm,3cm)$) (and4) {AND};
\draw (A1) --++ (right:11mm)|- (or6.input 1) node[above left] {$A_1$};
\draw (A1) --++ (right:11mm)|- (or4.input 1) node[above left] {$A_1$};
\draw (A2) --++ (right:12mm)|- (or6.input 2) node[below left] {$A_2$};
\draw (A2) --++ (right:12mm)|- (or2.input 1) node[above left] {$A_2$};
\draw (A3) --++ (right:13mm)|- (or4.input 2) node[below left] {$A_3$};
\draw (A3) --++ (right:13mm)|- (or2.input 2) node[below left] {$A_3$};
\draw (A1p) --++ (right:14mm)|- (or5.input 1) node[above left] {$A_1'$};
\draw (A1p) --++ (right:14mm)|- (or3.input 1) node[above left] {$A_1'$};
\draw (A2p) --++ (right:15mm)|- (or1.input 1) node[above left] {$A_2'$};
\draw (A2p) --++ (right:15mm)|- (or5.input 2) node[below left] {$A_2'$};
\draw (A3p) --++ (right:10mm)|- (or1.input 2) node[below left] {$A_3'$};
\draw (A3p) --++ (right:10mm)|- (or3.input 2) node[below left] {$A_3'$};

\draw (or1.output) -- ++(right:3mm) |- (and1.input 2) node[below left] {$B_{23}'$};
\draw (or2.output) -- ++(right:3mm) |- (and1.input 1) node[above left] {$B_{23}$};
\draw (or3.output) -- ++(right:3mm) |- (and2.input 2) node[below left] {$B_{13}'$};
\draw (or4.output) -- ++(right:3mm) |- (and2.input 1) node[above left] {$B_{13}$};
\draw (or5.output) -- ++(right:3mm) |- (and3.input 2) node[below left] {$B_{12}'$};
\draw (or6.output) -- ++(right:3mm) |- (and3.input 1) node[above left] {$B_{12}$};

\draw (and1.output) -- ++(right:3mm) |- (or7.input 2) node[below left] {$\mathfrak{B}_{1}$};
\draw (and2.output) -- ++(right:3mm) |- (or7.input 1) node[above left] {$\mathfrak{B}_{2}$};

\draw (or7.output) -- ++(right:1mm) |- (and4.input 2) node[below left] {$\mathfrak{B}_{1}\vee \mathfrak{B}_{2}$};
\draw (and3.output) -- ++(right:3mm) |- (and4.input 1) node[above left] {$\mathfrak{B}_{3}$};
\draw (and4.output) -- ++(right:5mm);

\end{tikzpicture}
\caption{The right hand side of Eq.(\ref{23}) holds, when the output of this logical circuit is ${\mathfrak B}_3=
(A_1\vee A_2)\wedge (A_1^\prime\vee A_2^\prime)$.} \label{fig3}
\end{figure}
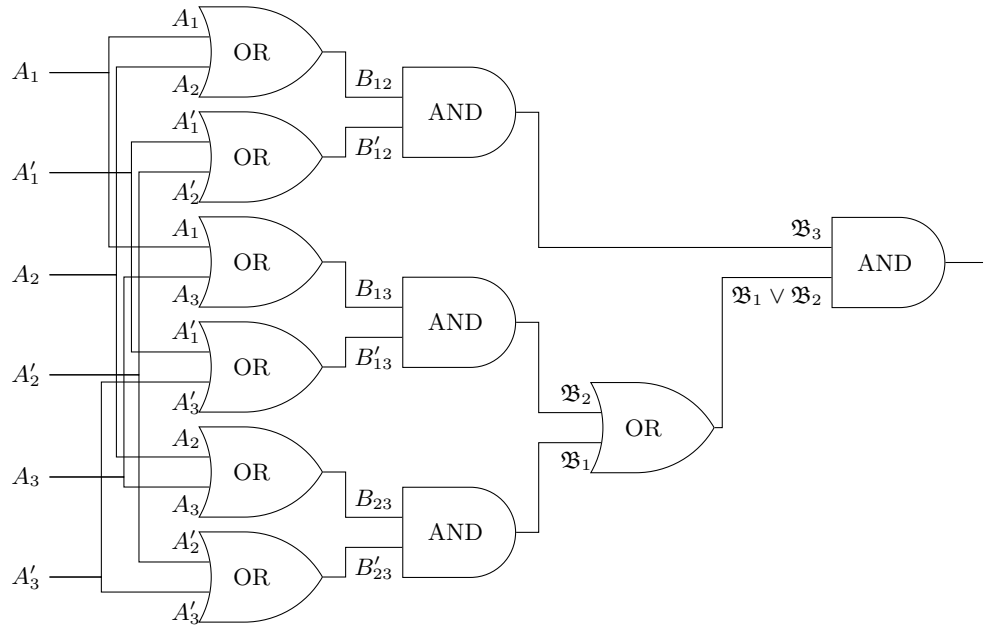 

In terms of gates figs.\ref{fig2}, \ref{fig3} show two different logical circuits related to the left and right hand sides of Eq.(\ref{23}).
The input in the two experiments is the same.
The Desargues property in a classical Physics context, states that if the output in the circuit of fig.\ref{fig2} is ${\cal C}_3$ then the output in the circuit of fig.\ref{fig3} is ${\mathfrak B_3}$.
We note that in general the two outputs are not correlated.
Only if the output of the circuit of fig.\ref{fig2} is ${\cal C}_3$ then the output in the circuit of fig.\ref{fig3} is ${\mathfrak B_3}$.
We use the term {\em selective correlations}, to describe this.
\begin{example}
We consider the case where $S=\{1,2,3\}$ and
\begin{eqnarray}
A_1=\{1\};\;\;\;A_2=\{2\};\;\;\;A_3=\{3\};\;\;\;A_1^\prime=\{1,3\};\;\;\;A_2^\prime=\{2,3\};\;\;\;A_3^\prime=\emptyset.
\end{eqnarray}
In this case,  in the circuit of fig.\ref{fig2} ${\cal C}_3=\{3\}$, and the overall output is ${\cal C}_3$.
Also  in the circuit of fig.\ref{fig3} ${\mathfrak B}_3=\{1,2\}$, and the overall output is ${\mathfrak B}_3$.
It is seen that if the output in the first circuit is ${\cal C}_3$, then the output of the second circuit is ${\mathfrak B}_3$.
But in general the two outputs are not correlated. 
\end{example}

\section{Finite quantum systems}
\subsection{The modular orthocomplemented lattice ${\cal L}(d)$}

We consider a quantum system $\Sigma (d)$ with variables in ${\mathbb Z}(d)$ (the integers modulo $d$), with states in a $d$-dimensional
Hilbert space $H(d)$\cite{vour,vou}.
In the set of subspaces of $H(d)$, we define the conjunction and disjunction\cite{la1,la2,la3,la4,la5}
\begin{eqnarray}
H_1\wedge H_2=H_1\cap H_2;\;\;\;\;\;H_1\vee H_2={\rm span}(H_1 \cup H_2).
\end{eqnarray}
These two operations define the logical `AND' and `OR' in a quantum context. The set of subspaces of $H(d)$
with these operations is a lattice, which we denote as ${\cal L}(d)$.
The corresponding partial order $\prec$ is `subspace'.
The smallest element is ${\cal O}=H(0)$ (the zero-dimensional subspace that contains only the zero vector), and 
the largest element is ${\cal I}=H(d)$. 

It is known that ${\cal L}(d)$ is a modular orthocomplemented lattice.
Modularity is a weak version of distributivity and it states that
\begin{eqnarray}
H_1\prec H_3\;\rightarrow\; H_1\vee (H_2\wedge H_3)=(H_1\vee H_2)\wedge H_3.
\end{eqnarray}
The orthocomplement of $H_1$ (logical NOT operation) is another subspace which we denote as 
$H_1^{\perp}$, with the properties
\begin{eqnarray}\label{ortho}
&&H_1\wedge H_1^{\perp}={\cal O};\;\;\;\;H_1\vee H_1^{\perp}={\cal I};\;\;\;\;(H_1^{\perp})^{\perp}=H_1\nonumber\\
&&(H_1\wedge H_2)^{\perp}=H_1^{\perp}\vee H_2^{\perp};\;\;\;\;(H_1\vee H_2)^{\perp}=H_1^{\perp}\wedge H_2^{\perp}.
\end{eqnarray}
We will use the notation $\Pi(H)$ for the projector to the subspace $H$. Clearly ${\rm Tr}[ \Pi(H)]=\dim (H)$.
Also
\begin{eqnarray}
\Pi(H_1)+\Pi(H_1^\perp)={\bf 1};\;\;\;\Pi(H_1)\Pi(H_1^\perp)=0;\;\;\;\dim (H_1)+\dim (H_1^\perp)=d.
\end{eqnarray}

An important property of modular lattices \cite{la1}, is that
\begin{eqnarray}\label{12}
\dim (H_1\vee H_2)+\dim (H_1\wedge H_2)=\dim (H_1)+\dim (H_2).
\end{eqnarray}

If $H_0$ is a subspace of $H(d)$, we consider the sublattice ${\cal L}(H_0)$ of ${\cal L}(d)$ that contains all subspaces $H_1$ such that ${\cal O}\prec H_1\prec  H_0$.
The orthocomplement of a space $H_1$ in ${\cal L}(H_0)$, is also called relative orthocomplement of $H_1$ in ${\cal L}(d)$.
In this case
\begin{eqnarray}\label{6}
&&H_1\vee H_1^{\perp }=H_0;\;\;\;\Pi(H_1)+\Pi(H_1^\perp)=\Pi(H_0);\;\;\;\dim (H_1)+\dim (H_1^\perp)=\dim (H_0)\nonumber\\
&&\Pi(H_1)\Pi(H_1^\perp)=0;\;\;\;\Pi(H_1)\Pi (H_0)=\Pi(H_1);\;\;\;\Pi(H_1^\perp)\Pi (H_0)=\Pi(H_1^\perp).
\end{eqnarray}
It will be clear from the context, which orthocomplement we consider. 
The relative orthocomplement is a limited logical NOT operation, within ${\cal L}(H_0)$.

The following proposition will be used below to calculate projectors.
\begin{proposition}\label{pro}
Let $v_1,...,v_n$ be $n$ linearly independent vectors in $H(d)$ (where $n\le d$), and $A$ the $d\times n$ matrix $(v_1,...,v_n)$ which has as columns the vectors $v_1,...v_n$.
The projector to the space spanned by these $n$ vectors is
\begin{eqnarray}
\Pi=A(A^\dagger A)^{-1}A^\dagger.
\end{eqnarray}
\end{proposition}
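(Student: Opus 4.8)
The plan is to show that the operator $\Pi = A(A^\dagger A)^{-1}A^\dagger$ is well defined and satisfies the two algebraic conditions that uniquely characterise the orthogonal projector onto a subspace, namely self-adjointness $\Pi^\dagger=\Pi$ and idempotency $\Pi^2=\Pi$, together with the requirement that its range be exactly $V=\mathrm{span}(v_1,\dots,v_n)$. These three facts identify $\Pi$ as the projector $\Pi(V)$.

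First I would establish that the $n\times n$ matrix $A^\dagger A$ is invertible, which is what makes the expression meaningful. Since $A^\dagger A$ is square, it suffices to show that its kernel is trivial. If $(A^\dagger A)x=0$ for some $x\in\mathbb{C}^n$, then $x^\dagger A^\dagger A x=\|Ax\|^2=0$, which forces $Ax=0$; because the columns $v_1,\dots,v_n$ of $A$ are linearly independent, this gives $x=0$. Hence $A^\dagger A$ has trivial kernel and is invertible. This is the only place the hypothesis of linear independence enters, and I expect it to be the crux of the argument, since everything afterwards is formal manipulation.

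With invertibility secured, self-adjointness follows from the fact that $A^\dagger A$ is Hermitian, so its inverse is Hermitian as well, giving $\Pi^\dagger = A[(A^\dagger A)^{-1}]^\dagger A^\dagger = A(A^\dagger A)^{-1}A^\dagger = \Pi$. Idempotency follows from the cancellation $A^\dagger A\,(A^\dagger A)^{-1}=\mathbf{1}_n$ appearing in the middle of $\Pi^2 = A(A^\dagger A)^{-1}(A^\dagger A)(A^\dagger A)^{-1}A^\dagger$, which collapses back to $\Pi$.

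Finally I would identify the range. Writing $\Pi = A M$ with $M=(A^\dagger A)^{-1}A^\dagger$, every vector $\Pi u$ lies in the column space of $A$, so $\mathrm{range}(\Pi)\subseteq V$. Conversely, for any $v=Ax\in V$ one computes $\Pi v = A(A^\dagger A)^{-1}(A^\dagger A)x = Ax = v$, so $\Pi$ restricts to the identity on $V$ and therefore $V\subseteq\mathrm{range}(\Pi)$. Combining the two inclusions gives $\mathrm{range}(\Pi)=V$. A Hermitian idempotent with range $V$ is precisely the orthogonal projector onto $V$, which completes the proof.
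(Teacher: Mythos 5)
Your proposal is correct and follows essentially the same route as the paper: establish invertibility of $A^\dagger A$ from the linear independence of the columns of $A$, then verify the projector identities for the explicit formula. Your version is in fact more complete, since the paper only notes ${\rm rank}(A^\dagger A)={\rm rank}(A)=n$ and $\Pi^2=\Pi$, leaving the self-adjointness $\Pi^\dagger=\Pi$ (needed for the projection to be \emph{orthogonal} rather than oblique) and the identification of the range with ${\rm span}(v_1,\dots,v_n)$ implicit, whereas you check both explicitly.
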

\begin{proof}
We first point out that ${\rm rank} (A)=n$, and therefore ${\rm rank}(A^\dagger A)=n$, which proves that the matrix $A^\dagger A$ is invertible.
It is easily seen that $\Pi^2=\Pi$, and therefore $\Pi$ is a projector, to the space spanned by the vectors $v_1,...,v_n$.
\end{proof}

\subsection{The Desargues property and selective quantum correlations}

We introduce a `dictionary' which shows the analogy between Projective Geometry and Quantum Physics, and which will be used to `translate' 
Desargues theorem in Projective Geometry, into Desargues theorem in Quantum Physics:
\begin{itemize}
\item
We call points the one-dimensional subspaces, lines the two-dimensional subspaces of $H(d)$, and planes the three-dimensional subspaces of $H(d)$. 
We will denote with lower and upper case letters, the points and lines, correspondingly. 
\begin{itemize}
\item
A point $h$ contains a single quantum state, and ${\rm Tr}[ \Pi(h)]=\dim (h)=1$.
\item
A line $H$ contains two quantum states and all their superpositions, and ${\rm Tr}[ \Pi(H)]=\dim (H)=2$.
\end{itemize}
\item
The central concept of projective geometry, is a `duality' between points and lines.
For every statement that involves points, there is a corresponding statement that involves lines, with the 
conjunction (logical AND) and disjunction (logical OR) exchanging roles.
In a quantum context, if we have an expression that contains one-dimensional spaces (points) and two-dimensional spaces (lines) within a three-dimensional subspace $H_0$,
the relative orthocomplement  with respect to $H_0$, exchanges the roles of points and lines, and it also exchanges the roles of 
conjunction and disjunction (Eqs.(\ref{ortho}),(\ref{6})).
\item
If $h_1, h_2$ are distinct one-dimensional subspaces of $H(d)$, 
the disjunction $h_1\vee h_2$ is a line through the points $h_1, h_2$. 
Indeed, in this case Eq.(\ref{12}) gives $\dim (h_1\vee h_2)=\dim (h_1)+\dim (h_2)-\dim (h_1\wedge h_2)=1+1-0=2$.
Physically, $h_1\vee h_2$ contains all the superpositions of the two quantum states corresponding to $h_1$ and $h_2$.
\item
If $H_1, H_2$ are two-dimensional subspaces of $H(d)$ such that $\dim (H_1\vee H_2)=3$,
the conjunction $H_1\wedge H_2$ is a point at the intersection of the lines $H_1, H_2$.
Indeed, in this case Eq.(\ref{12}) gives $\dim (H_1\wedge H_2)=\dim (H_1)+\dim (H_2)-\dim (H_1\vee H_2)=2+2-3=1$.
\item
If $H_1=h_1^\perp$, is the relative orthocomplement of $h_1$ with respect to some subspace $H_0$, 
then $H_1$ and $h_1$ represent the negation of each other (relative logical NOT within the $H_0$).
In terms of projectors $\Pi(H_1)+\Pi (h_1)=\Pi (H_0)$ and $\Pi(H_1)\Pi (h_1)=0$.

\end{itemize}

It is known\cite{A0,A1,A2,A3} that the elements of ${\cal L}(d)$ obey the Desargues property, which is inspired by the well known Desargues theorem in Projective Geometry (e.g. \cite{pro}). It is also known that there exist modular lattices which do not obey the Desargues property.
This is an example of the fact that the lattices related to finite quantum systems, have stronger properties than the
modular orthocomplemented lattices.

We first introduce `triangles':
\begin{itemize}
\item
Let $h_1, h_2, h_3$ be distinct one-dimensional subspaces of $H(d)$ (points).
\begin{itemize}
\item
In the {\em generic case} that ${\dim}(h_1\vee h_2\vee h_3)=3$, we say that the points $h_1, h_2, h_3$ form a triangle. 
\item
If ${\dim}(h_1\vee h_2\vee h_3)=2$, then the three vectors corresponding to these one-dimensional spaces, are linearly dependent.
\end{itemize}
\item
Let $H_1, H_2, H_3$ be distinct two-dimensional subspaces of $H(d)$ (lines). 
\begin{itemize}
\item
In the {\em generic case} that
$ \dim(H_1\wedge H_2\wedge H_3)=0$ (i.e., the three lines $H_1, H_2, H_3$ do not intersect at the same point), we say that the lines $H_1, H_2, H_3$ form a triangle.
\item
If $\dim ( H_1\wedge H_2\wedge H_3)=1$, i.e., if the three lines $H_1, H_2, H_3$ intersect at the same point,
then the three vectors corresponding to their orthocomplements $H_1^\perp, H_2^\perp, H_3^\perp$, are linearly dependent.
\end{itemize}
\end{itemize}
We next introduce some notation.
Let $(h_1, h_2, h_3)$ and $(h_1^\prime, h_2^\prime , h_3^\prime)$ be two triangles which for simplicity we assume to be on the same plane, i.e., 
\begin{eqnarray}\label{35}
h_1\vee h_2\vee h_3=h_1^\prime\vee h_2^\prime \vee h_3^\prime=H_0;\;\;\;\dim (H_0)=3.
\end{eqnarray} 
This assumption is adopted for convinience, and it is not essential.
Also let 
\begin{eqnarray}
H_1=h_1 ^\perp;\;\;\;H_2= h_2 ^\perp;\;\;\;H_3= h_3 ^\perp;\;\;\;H_1^\prime= (h_1^\prime) ^\perp;\;\;\;H_2^\prime= (h_2^\prime) ^\perp;\;\;\;H_3^\prime= (h_3^\prime) ^\perp,
\end{eqnarray}
be their relative orthocomplements, with respect to the subspace $H_0$.
They are two-dimensional spaces (lines).

We consider the lines
\begin{eqnarray}
H_{ij}=h_i\vee h_j;\;\;\;H_{ij}^\prime=h_i^\prime\vee h_j^\prime;\;\;\;i\ne j
\end{eqnarray}
through the vertices in each of these triangles,
and the points at the intersection of a side in the first triangle with a side in the second triangle
\begin{eqnarray}
{\mathfrak h}_k=H_{ij}\wedge H _{ij}^\prime=(h_i\vee h_j)\wedge (h_i^\prime \vee h_j^\prime );\;\;\;\{i,j,k\}=\{1,2,3\}.
\end{eqnarray}
Physically, ${\mathfrak h}_k$ contains a state which is both a superposition of the two quantum states contained in $h_i$ and $h_j$, and also
a superposition of the two quantum states contained in $h_i^\prime$ and $h_j^\prime$.

We also consider the lines that join a vertex in the first triangle with a vertex in the second triangle:
\begin{eqnarray}
{\cal H}_{i}=h_i\vee h_i^\prime.
\end{eqnarray}
Physically, ${\cal H}_{i}$ contains all the superpositions of the two quantum states corresponding to $h_i$ and $h_i^\prime$.

The following proposition gives the Desargues property in a quantum context:
\begin{proposition}\label{pro2}
If the lines ${\cal H}_1, {\cal H}_2, {\cal H}_3$ intersect at the same point, 
then the points ${\mathfrak h}_1, {\mathfrak h}_2, {\mathfrak h}_3$ are on the same line, and vice-versa.
We can express this in different equivalent ways:
\begin{itemize}
\item[(1)]
\begin{eqnarray}\label{A}
\dim ({\cal H}_1\wedge {\cal H}_2\wedge {\cal H}_3)=1\;\;\leftrightarrow \;\;\dim ({\mathfrak h}_1\vee {\mathfrak h}_2\vee {\mathfrak h}_3)=2.
\end{eqnarray}
\item[(2)]
If ${\cal H}_1\wedge {\cal H}_2$ is a subspace of ${\cal H}_3$, then ${\mathfrak h}_3$ is a subspace of ${\mathfrak h}_1\vee {\mathfrak h}_2$:
\begin{eqnarray}\label{5}
{\cal H}_1\wedge {\cal H}_2\prec {\cal H}_3\;\;\leftrightarrow \;\;{\mathfrak h}_3 \prec {\mathfrak h}_1\vee {\mathfrak h}_2.
\end{eqnarray}
This can also be expressed in terms of projectors as follows. If
$\Pi({\cal H}_3)\Pi({\cal H}_1\wedge {\cal H}_2)=\Pi({\cal H}_1\wedge {\cal H}_2)$ then $\Pi({\mathfrak h}_1\vee {\mathfrak h}_2)\Pi({\mathfrak h}_3 )=\Pi({\mathfrak h}_3 )$, and vice-versa.
\end{itemize}
\end{proposition}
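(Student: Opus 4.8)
The plan is to recognize this proposition as the classical Desargues theorem for the projective plane attached to the three-dimensional space $H_0$, read through the dictionary already established: points are the one-dimensional subspaces, lines the two-dimensional subspaces of $H_0$, and incidence is the partial order $\prec$. Since $H_0$ is coordinatized over the field ${\mathbb C}$, this projective plane is Desarguesian, so the statement is forced to hold; I would nonetheless give the self-contained coordinate proof, which is short and also yields the projector identities directly. The configuration matches Desargues exactly: the ${\cal H}_i=h_i\vee h_i^\prime$ are the lines joining corresponding vertices, so ``${\cal H}_1,{\cal H}_2,{\cal H}_3$ concurrent'' is perspectivity from a point, while ${\mathfrak h}_k=H_{ij}\wedge H_{ij}^\prime$ is the meet of the two sides opposite the $k$-th vertices, so ``${\mathfrak h}_1,{\mathfrak h}_2,{\mathfrak h}_3$ collinear'' is perspectivity from a line.

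For the forward direction I would pick nonzero representatives $a_i$ of $h_i$ and $a_i^\prime$ of $h_i^\prime$. Concurrency means there is a nonzero $v$ with $v\in {\cal H}_i={\rm span}(a_i,a_i^\prime)$ for every $i$; in the generic case the centre of perspectivity is distinct from all six vertices, so after rescaling the $a_i,a_i^\prime$ I may arrange $v=a_i+a_i^\prime$ for $i=1,2,3$. Subtracting these three equalities in pairs gives
\begin{eqnarray}
&&w_1=a_2-a_3=a_3^\prime-a_2^\prime,\quad w_2=a_3-a_1=a_1^\prime-a_3^\prime,\quad w_3=a_1-a_2=a_2^\prime-a_1^\prime.
\end{eqnarray}
Each $w_k$ lies in both $H_{ij}={\rm span}(a_i,a_j)$ and $H_{ij}^\prime={\rm span}(a_i^\prime,a_j^\prime)$, hence $w_k$ spans ${\mathfrak h}_k=H_{ij}\wedge H_{ij}^\prime$, which is one-dimensional in the generic case by Eq.(\ref{12}). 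The single identity $w_1+w_2+w_3=0$ then exhibits a linear dependence among three vectors spanning ${\mathfrak h}_1,{\mathfrak h}_2,{\mathfrak h}_3$, so $\dim({\mathfrak h}_1\vee{\mathfrak h}_2\vee{\mathfrak h}_3)\le 2$; since the three points are distinct the dimension is exactly $2$, which is collinearity. This proves one implication of (1).

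For the converse I would invoke the point/line duality supplied by the relative orthocomplement with respect to $H_0$ (Eqs.(\ref{ortho}),(\ref{6})): it exchanges one- and two-dimensional subspaces and interchanges $\wedge$ and $\vee$, carrying a Desargues configuration to another one in which ``collinear'' and ``concurrent'' swap roles. Applying the implication just proved to the dual configuration gives the reverse implication, establishing the $\leftrightarrow$ in (1). To pass from (1) to (2) I would use the generic-position dimension counts of Eq.(\ref{12}): two distinct lines ${\cal H}_1,{\cal H}_2$ in the plane $H_0$ meet in a single point, so ${\cal H}_1\wedge {\cal H}_2\prec {\cal H}_3$ is equivalent to $\dim({\cal H}_1\wedge {\cal H}_2\wedge {\cal H}_3)=1$; dually, two distinct points ${\mathfrak h}_1,{\mathfrak h}_2$ span a unique line, so ${\mathfrak h}_3\prec {\mathfrak h}_1\vee {\mathfrak h}_2$ is equivalent to $\dim({\mathfrak h}_1\vee {\mathfrak h}_2\vee {\mathfrak h}_3)=2$. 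Finally the projector form follows from the standard equivalence $B\prec A\Leftrightarrow \Pi(A)\Pi(B)=\Pi(B)$, applied with $A={\cal H}_3$, $B={\cal H}_1\wedge {\cal H}_2$ and with $A={\mathfrak h}_1\vee {\mathfrak h}_2$, $B={\mathfrak h}_3$.

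The main obstacle is the control of degeneracies rather than the algebra, which collapses to the one-line relation $w_1+w_2+w_3=0$. I expect the delicate points to be: justifying the normalization $v=a_i+a_i^\prime$ (which fails if the centre of perspectivity coincides with a vertex, i.e. some coefficient vanishes); confirming that each ${\mathfrak h}_k$ is genuinely one-dimensional and that the three are distinct, so that ``$\dim\le 2$'' upgrades to honest collinearity; and verifying that the orthocomplement really sends our configuration to a valid dual Desargues configuration, in particular that $H_{ij}$, $H_{ij}^\prime$ and ${\cal H}_i$ remain in general position after dualizing. These are precisely the places where the blanket generic-case hypotheses built into the definitions of the triangles are needed.
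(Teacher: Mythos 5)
Your proposal is correct, but it takes a genuinely different route from the paper, because the paper supplies no argument at all for this proposition: its ``proof'' is a one-line citation to Gr\"atzer and to J\'onsson's and Haiman's work, where the Desargues property of subspace lattices is obtained lattice-theoretically (the lattice of subspaces of a vector space satisfies the Arguesian identity, as do lattices of commuting equivalence relations). You instead exploit the coordinatization of $H_0$ over ${\mathbb C}$ and give the classical proof that a projective plane over a field (indeed any division ring) is Desarguesian: normalize representatives so that the common vector is $v=a_i+a_i^\prime$, form the differences $w_k$ spanning the ${\mathfrak h}_k$, and read collinearity off the single relation $w_1+w_2+w_3=0$; the converse then comes from the point/line duality implemented by the relative orthocomplement (noting, as you implicitly do, that the converse of Desargues is the dual of Desargues applied to the triples of sides $H_{23},H_{13},H_{12}$ and $H_{23}^\prime,H_{13}^\prime,H_{12}^\prime$, whose pairwise meets recover the vertices $h_k,h_k^\prime$). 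What your approach buys is a short, self-contained verification that also delivers the projector formulation directly through the standard equivalence $B\prec A \Leftrightarrow \Pi(A)\Pi(B)=\Pi(B)$; what the paper's citations buy is generality and robustness --- the Arguesian law holds as an identity, with no genericity hypotheses, and sits inside the family of higher Arguesian identities for linear lattices invoked in the paper's discussion. The degeneracy caveats you flag (the centre of perspectivity distinct from the six vertices so that the rescaling coefficients are nonzero; $H_{ij}\ne H_{ij}^\prime$ so each ${\mathfrak h}_k$ is one-dimensional; distinctness of the ${\mathfrak h}_k$ so that $\dim\le 2$ upgrades to exactly $2$) are precisely the generic-position assumptions the paper builds silently into its definitions of triangles, so your treatment is, if anything, more explicit than the source on exactly the points where care is needed.
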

\begin{proof}
The proof is given in refs\cite{la3,A1,A2,A3}.
\end{proof}
\begin{remark}
The condition $\Pi({\cal H}_3)\Pi({\cal H}_1\wedge {\cal H}_2)=\Pi({\cal H}_1\wedge {\cal H}_2)$ is stronger than the commutativity condition
$[\Pi({\cal H}_3),\Pi({\cal H}_1\wedge {\cal H}_2)]=0$, in the sense that the former implies the latter, but not vice-versa.
Similarly, the condition $\Pi({\mathfrak h}_1\vee {\mathfrak h}_2)\Pi({\mathfrak h}_3 )=\Pi({\mathfrak h}_3 )$, is stronger than the commutativity condition
$[\Pi({\mathfrak h}_3 ), \Pi({\mathfrak h}_1\vee {\mathfrak h}_2)]=0$.
\end{remark}
In connection with the above proposition, we discuss the following two experiments:
\begin{itemize}
\item
On a state $\ket{s}$ we perform measurement with the projector $\Pi({\cal H}_1\wedge {\cal H}_2)$.
This will give the outcome `yes' with probability $p_1={\bra s}\Pi({\cal H}_1\wedge {\cal H}_2)\ket{s}$, in which case the state ${\ket s}$ will collapse to the state
\begin{eqnarray}
\ket{s_1}=\frac{1}{\sqrt{p_1}}\Pi({\cal H}_1\wedge {\cal H}_2)\ket{s}.
\end{eqnarray}
We then perform a second measurement with the projector $\Pi({\cal H}_3)$, and we get the outcome `yes' with probability $q_1=\bra {s_1}\Pi({\cal H}_3)\ket{s_1}$,
in which case the state $\ket {s_1}$ will collapse to the state
\begin{eqnarray}
\ket{s_2}=\frac{1}{\sqrt{q_1}}\Pi({\cal H}_3)\ket{s_1}=\frac{1}{\sqrt{p_1q_1}}\Pi({\cal H}_3)\Pi({\cal H}_1\wedge {\cal H}_2)\ket{s}.
\end{eqnarray}
We show this schematically:
\begin{eqnarray}\label{EXP1}
\ket{s}\overset {\Pi({\cal H}_1\wedge {\cal H}_2)}\longrightarrow \ket{s_1}\overset{\Pi({\cal H}_3)}\longrightarrow \ket{s_2}.
\end{eqnarray}
\item
On the same  state $\ket{s}$ we perform measurement with the projector $\Pi({\mathfrak h}_3 )$.
This will give the outcome `yes' with probability $p_2={\bra s}\Pi({\mathfrak h}_3 )\ket{s}$, in which case the state ${\ket s}$ will collapse to the state
\begin{eqnarray}
\ket{t_1}=\frac{1}{\sqrt{p_2}}\Pi({\mathfrak h}_3 )\ket{s}.
\end{eqnarray}
We then perform a second measurement with the projector $\Pi({\mathfrak h}_1\vee {\mathfrak h}_2)$,
and we get the outcome `yes' with probability $q_2=\bra {t_1}\Pi({\mathfrak h}_1\vee {\mathfrak h}_2)\ket{t_1}$,
in which case the state $\ket {t_1}$ will collapse to the state
\begin{eqnarray}
\ket{t_2}=\frac{1}{\sqrt{q_2}}\Pi({\mathfrak h}_1\vee {\mathfrak h}_2)\ket{t_1}=\frac{1}{\sqrt{p_2q_2}}\Pi({\mathfrak h}_1\vee {\mathfrak h}_2)
\Pi({\mathfrak h}_3 )\ket{s}.
\end{eqnarray}
We show this schematically:
\begin{eqnarray}\label{EXP2}
\ket{s}\overset {\Pi({\mathfrak h}_3 )}\longrightarrow \ket{t_1}\overset{\Pi({\mathfrak h}_1\vee {\mathfrak h}_2)}\longrightarrow \ket{t_2}.
\end{eqnarray}
\end{itemize}
Desargues property says that there is the following selective correlation between these two experiments.
If $\Pi({\cal H}_3)\Pi({\cal H}_1\wedge {\cal H}_2)=\Pi({\cal H}_1\wedge {\cal H}_2)$
and therefore
$\ket{s_1}=\ket{s_2}$ in the first experiment, then 
$\Pi({\mathfrak h}_1\vee {\mathfrak h}_2)\Pi({\mathfrak h}_3 )=\Pi({\mathfrak h}_3 )$ and therefore
$\ket{t_1}=\ket{t_2}$ in the second experiment (and vice-versa).
We note that in this case $\ket{s_1}=\ket{s_2}$ is an eigenstate of $\Pi({\cal H}_3)$,
 and $\ket{t_1}=\ket{t_2}$ is an eigenstate of $\Pi({\mathfrak h}_1\vee {\mathfrak h}_2)$.
In the general case, the two outputs are not correlated.
The two experiments in Eqs(\ref{EXP1}),(\ref{EXP2}), can be performed in different locations.

Various quantities that quantify correlations (e.g., quantum discord \cite{OZ}, etc) might be used for further analysis of   
selective correlations.
\subsection{Example}
In $H(5)$ we consider the one-dimensional subspaces $(h_1, h_2, h_3, h_1^\prime, h_2^\prime , h_3^\prime)$ that contain the vectors
\begin{eqnarray}
h_1=\begin{pmatrix}
0\\
1\\
1+i\\
2\\
0\\
\end{pmatrix};\;\;\;
h_2=\begin{pmatrix}
0\\
1\\
0\\
2\\
0\\
\end{pmatrix};\;\;\;
h_3=\begin{pmatrix}
0\\
1\\
1+i\\
0\\
0\\
\end{pmatrix};\;\;\;
h_1^\prime=\begin{pmatrix}
0\\
1\\
3\\
2\\
0\\
\end{pmatrix};\;\;\;
h_2^\prime=\begin{pmatrix}
0\\
1-i\\
1+i\\
2\\
0\\
\end{pmatrix};\;\;\;
h_3^\prime=\begin{pmatrix}
0\\
1-i\\
-1-i\\
4-2i\\
0\\
\end{pmatrix}
\end{eqnarray}
In order to simplify the notation, we represent a space with a `generic vector' that it contains.

We easily show that $\dim (h_1\vee h_2\vee h_3)=3$, by finding numerically that the rank of the $5\times 3$ matrix $(v_1,v_2,v_3)$ is $3$.
In a similar way we find that $\dim (h_1^\prime\vee h_2^\prime \vee h_3^\prime)=3$.
Also the fact that the first and last component of all vectors is zero, shows that Eq.(\ref{35}) is satisfied.
Therefore we have two triangles $(h_1, h_2, h_3)$ and $(h_1^\prime, h_2^\prime , h_3^\prime)$, on the same `plane', which is the three-dimensional space $H_0$, with generic vector
\begin{eqnarray}
H_0=\begin{pmatrix}
0\\
\alpha\\
\beta\\
\gamma\\
0\\
\end{pmatrix}.
\end{eqnarray}

The lines ( two-dimensional spaces) ${\cal H}_1=h_1\vee h_1^\prime$,  ${\cal H}_2=h_2\vee h_2^\prime$, ${\cal H}_3=h_3\vee h_3^\prime$, contain the superpositions
\begin{eqnarray}\label{10}
&&{\cal H}_1=h_1\vee h_1^\prime=\begin{pmatrix}
0\\
a_1+b_1\\
a_1+a_1i+3b_1\\
2a_1+2b_1\\
0\\
\end{pmatrix};\;\;\;
{\cal H}_2=h_2\vee h_2^\prime=\begin{pmatrix}
0\\
a_2+b_2-b_2i\\
b_2+b_2i\\
2a_2+2b_2\\
0\\
\end{pmatrix}\nonumber\\&&
{\cal H}_3=h_3\vee h_3^\prime=\begin{pmatrix}
0\\
a_3+b_3-b_3i\\
a_3+a_3i-b_3-b_3i\\
4b_3-2b_3i\\
0\\
\end{pmatrix}
\end{eqnarray}
The three lines ${\cal H}_1, {\cal H}_2, {\cal H}_3$ go through the point
\begin{eqnarray}\label{15}
w=\begin{pmatrix}
0\\
2-i\\
0\\
4-2i\\
0\\
\end{pmatrix}
\end{eqnarray}
This is easily seen if we use the values
\begin{eqnarray}
a_1=3;\;\;\;b_1=-1-i;\;\;\;a_2=2-i;\;\;\;b_2=0;\;\;\;a_3=1;\;\;\;b_3=1,
\end{eqnarray}
in Eq.(\ref{10}).

We next calculate the lines (two-dimensional spaces) $H_{12}=h_1\vee h_2$, $H_{13}=h_1\vee h_3$, $H_{23}=h_2\vee h_3$:
They contain the superpositions:
\begin{eqnarray}
&&H_{12}=h_1\vee h_2=\begin{pmatrix}
0\\
A_1+A_2\\
A_1+A_1i\\
2A_1+2A_2\\
0\\
\end{pmatrix};\;\;\;
H_{13}=h_1\vee h_3=\begin{pmatrix}
0\\
A_3+A_4\\
A_3+A_3i+A_4+A_4i\\
2A_3\\
0\\
\end{pmatrix}\nonumber\\
&&H_{23}=h_2\vee h_3=\begin{pmatrix}
0\\
A_5+A_6\\
A_6+A_6i\\
2A_5\\
0\\
\end{pmatrix}
\end{eqnarray}
Similarly we calculate the lines
(two-dimensional spaces) $H_{12}^\prime=h_1^\prime\vee h_2^\prime$, $H_{13}^\prime=h_1^\prime\vee h_3^\prime$, $H_{23}^\prime=h_2^\prime\vee h_3^\prime$:
They contain the superpositions:
\begin{eqnarray}
&&H_{12}^\prime=h_1^\prime\vee h_2^\prime=\begin{pmatrix}
0\\
A_1^\prime+A_2^\prime-A_2^\prime i\\
3A_1^\prime+A_2^\prime +A_2^\prime i\\
2A_1^\prime+2A_2^\prime\\
0\\
\end{pmatrix};\;\;\;
H_{13}^\prime=h_1^\prime\vee h_3^\prime=\begin{pmatrix}
0\\
A_3^\prime+A_4^\prime-A_4^\prime i\\
3A_3^\prime-A_4^\prime-A_4^\prime i\\
2A_3^\prime+4A_4^\prime-2A_4^\prime i\\
0\\
\end{pmatrix}\nonumber\\
&&H_{23}^\prime=h_2^\prime\vee h_3^\prime=\begin{pmatrix}
0\\
A_5^\prime-A_5^\prime i+A_6^\prime-A_6^\prime i\\
A_5^\prime+A_5^\prime i-A_6^\prime-A_6^\prime i\\
2A_5^\prime+4A_6^\prime-2A_6^\prime i\\
0\\
\end{pmatrix}
\end{eqnarray}
We next find the intersections:
\begin{eqnarray}\label{19}
{\mathfrak h}_3=H_{12}\wedge H _{12}^\prime=
\begin{pmatrix}
0\\
1\\
3\\
2\\
0\\
\end{pmatrix};\;\;\;
{\mathfrak h}_2=H_{13}\wedge H _{13}^\prime=
\begin{pmatrix}
0\\
1\\
1+i\\
3\\
0\\
\end{pmatrix};\;\;\;
{\mathfrak h}_1=H_{23}\wedge H _{23}^\prime=
\begin{pmatrix}
0\\
1-i\\
-1-i\\
4-2i\\
0\\
\end{pmatrix}
\end{eqnarray}
We can easily check that these three points are on the same line because
\begin{eqnarray}
{\rm rank} ({\mathfrak h}_1, {\mathfrak h}_2, {\mathfrak h}_3)=2.
\end{eqnarray}
It is seen that Eq.(\ref{A}) holds.

An alternative approach is to show that the relation that involves projectors in proposition \ref{pro2}, holds.
In order to calculate $\Pi({\cal H}_3)$, we chose two vectors in the space ${\cal H}_3$ by putting $a_3=1$, $b_3=0$ and also $a_3=0$, $b_3=1$
in the general vector in Eq.(\ref{10}). This gives the matrix $A$ for the projector formula in proposition \ref{pro}. We get
\begin{eqnarray}
A=\begin{pmatrix}
0& 0\\
1 &1-i\\
1+i& -1-i\\
0 &4-2i\\
0 &0\\
\end{pmatrix}\;\;\rightarrow\;\;\Pi({\cal H}_3)=
\begin{pmatrix}
        0          &        0          &        0          &        0      &            0        \\  
        0           &  0.4286 & 0.2857 - 0.2857i &  0.2857   &    0        \\  
        0         &    0.2857 + 0.2857i  & 0.7143& -0.1429 - 0.1429i   &     0       \\   
        0          &   0.2857      &      -0.1429 + 0.1429i &  0.8571      &            0        \\  
        0          &        0           &       0       &           0            &      0        
\end{pmatrix}
\end{eqnarray}
The one-dimensional space ${\cal H}_1\wedge {\cal H}_2$ consists of the vector in Eq.(\ref{15}), and therefore
\begin{eqnarray}
A=\begin{pmatrix}
0\\
2-i\\
0\\
4-2i\\
0\\
\end{pmatrix}\;\;\rightarrow\;\;\Pi({\cal H}_1\wedge {\cal H}_2)=\begin{pmatrix}
        0        &          0         &         0           &       0             &     0    \\      
        0        &   0.2    &    0    &         0.4  &     0          \\
        0             &     0       &           0       &           0      &            0      \\    
        0        &     0.4   &    0        &     0.8     &   0          \\
        0          &        0          &        0       &           0         &         0     
\end{pmatrix}.
\end{eqnarray}
The one-dimensional space ${\mathfrak h}_3$ is given in Eq.(\ref{19}), and therefore
\begin{eqnarray}
A=\begin{pmatrix}
0\\
1\\
3\\
2\\
0\\
\end{pmatrix}\;\;\rightarrow\;\;\Pi({\mathfrak h}_3)=\begin{pmatrix}
  0     &    0    &     0    &     0      &   0\\
         0 &   0.0714 &   0.2143   & 0.1429  &       0\\
         0  & 0.2143  &  0.6429 &   0.4286     &    0\\
         0   & 0.1429 &   0.4286 &   0.2857   &      0\\
         0   &      0    &     0     &    0  &       0
\end{pmatrix}.
\end{eqnarray}
The $A$ matrix for the two-dimensional space ${\mathfrak h}_1\vee {\mathfrak h}_2$ can be found from Eq.(\ref{19}), where vectors in ${\mathfrak h}_1, {\mathfrak h}_2$ are given:
\begin{eqnarray}
A=\begin{pmatrix}
0&0\\
1&1-i\\
1+i&-1-i\\
3&4-2i\\
0&0\\
\end{pmatrix}\;\;\rightarrow\;\;\Pi({\mathfrak h}_1\vee {\mathfrak h}_2)=\begin{pmatrix}
 0        &          0           &       0        &          0        &          0  \\        
        0     &        0.1017   &          0.1186 + 0.0339i  & 0.2712 - 0.0508i   &     0      \\    
        0     &        0.1186 - 0.0339i &  0.9831      &      -0.0339 + 0.0169i   &     0    \\      
        0     &        0.2712 + 0.0508i & -0.0339 - 0.0169i  & 0.9153     &             0   \\       
        0        &          0     &             0         &         0       &           0
\end{pmatrix}.
\end{eqnarray}
It is easily verified that the relations between the projectors given in proposition \ref{pro2}, hold.
Consequently with these projectors, measurements on an arbitrary vector $\ket{s}$ will give $\ket{s_1}=\ket{s_2}$ and $\ket{t_1}=\ket{t_2}$.
For example, if
\begin{eqnarray}
\ket{s}=\begin{pmatrix}
   0.2294\\
    0.4588\\
    0.2294\\
    0.6882\\
    0.4588\\
\end{pmatrix}
\end{eqnarray}
we get $p_1=0.673$, $p_2=0.454$, $q_1=q_2=1$ and 
\begin{eqnarray}
\ket{s_1}=\ket{s_2}=\begin{pmatrix}
         0\\
    0.4472\\
         0\\
    0.8944\\
         0\\
\end{pmatrix};\;\;\;
\ket{t_1}=\ket{t_2}=\begin{pmatrix}
         0\\
    0.2673\\
    0.8018\\
    0.5345\\
         0\\
\end{pmatrix}.
\end{eqnarray}

In the general case that the three lines ${\cal H}_1, {\cal H}_2, {\cal H}_3$ do not intersect at the same point,
the outputs in the two experiments are uncorrelated. So we have selective correlations between the outputs of the two experiments, only in the case that the lines
${\cal H}_1, {\cal H}_2, {\cal H}_3$ go through the same point, as in the example above (where they go through the point in Eq.(\ref{15})).

\section{Discussion}

The Desargues theorem is fundamental in Projective Geometry\cite{pro}.
An analogue to this is discussed here, in the context of both Classical and Quantum Physics.
It shows the existence of selective correlations, which could be useful in classical and quantum technologies. 

In a classical context, the  Desargues property is expressed in proposition \ref{pro1}.
From an applied point of view, it implies the existence of selective correlations in the two logical circuits 
in figs.\ref{fig1},\ref{fig2}, each of which has six inputs $A_1, A_2, A_3, A_1^\prime , A_2^\prime, A_3^\prime$.
If the output of the first one is ${\cal C}_3=A_3 \vee A_3^\prime$,
then the output of the second one is ${\mathfrak B}_3=(A_1\vee A_2)\wedge (A_1^\prime\vee A_2^\prime)$.
But in general the two outputs are not correlated. 

In a quantum context, the  Desargues property is expressed in proposition \ref{pro2}.
From an applied point of view, it implies the existence of selective correlations in the two experiments shown 
schematically in Eqs(\ref{EXP1}),(\ref{EXP2}).
If in one experiment the second measurement does not change the output of the first measurement, then the same is true in the other experiment.
This holds only for particular set of projectors, defined in proposition \ref{pro2}.

The Desargues theorem has been used in a quantum context different from ours, in ref.\cite{Rau}.

The quantum logic of finite quantum systems is described by modular orthocomplemented lattices.
But this is a rather weak statement, because these systems have stronger properties, like
the Desargues property discussed here.
An even stronger statement is that the quantum logic of finite quantum systems is described by
the lattices of commuting equivalence relations (sometimes called linear lattices)\cite{A0,A1,A2,A3,A4},
which are not discussed here.
Linear lattices are modular orthocomplemented lattices, with many
extra properties that involve not only the Desargues theorem, but also its many generalizations.
The interpretation of all these properties in a quantum context in terms of quantum correlations, is an open problem.
The present paper is an important first step in this direction, because it discusses the Desargues theorem which is the 
fundamental starting point. 
  
This whole area provides the theoretical foundation for both classical (Boolean) computation, and quantum computation.


\begin{thebibliography}{999}
\bibitem{LO1}
G. Birkhoff, J. von Neumann, Ann. Math. 37, 823 (1936)
\bibitem{LO2}
C. Piron, `Foundations of quantum physics', Benjamin, New York, 1976
\bibitem{LO3}
G.W.  Mackey, `Mathematical foundations of quantum mechanics', Benjamin, New York, 1963
\bibitem{LO4}
J. Jauch, `Foundations of quantum mechanics', Addison-Wesley, Reading, 1968
\bibitem{LO5}
V.S. Varadarajan, `Geometry of quantum theory' (Springer, Berlin, 1968)
\bibitem{LO6}
E. Beltrametti, G. Cassinelli, `The Logic of Quantum Mechanics', Addison-Wesley, Reading, 1981
\bibitem{vour}
A. Vourdas, Rep. Prog. Phys. 67, 1 (2004)
\bibitem{vou}
A. Vourdas, `Finite and profinite quantum systems' (Springer, Berlin, 2017)
\bibitem{la1}
G. Birkhoff `Lattice theory' (Amer. Math. Soc., Rhode Island, 1995)
\bibitem{la2}
G. Szasz, `Introduction to lattice theory' (Academic, London, 1963)
\bibitem{la3}
G.A.Gratzer, `General lattice theory' (Springer, Berlin, 2003)
\bibitem{la4}
G. Kalmbach, `Orthomodular lattices', Academic, London, 1983
\bibitem{la5}
P. Ptak, S. Pulmannova, `Orthomodular lattices as quantum logics', Kluwer, Dordrecht, 1991
\bibitem{A0}
G-C Rota, Notices of the Am. Math. Soc. 44, 1440 (1997)
\bibitem{A1}
B. Jonsson, Math. Scand. 1, 13 (1953)
\bibitem{A2}
B. Jonsson, Math. Scand. 2, 295 (1954)
\bibitem{A3}
M. Haiman, Adv. Math. 58, 209 (1985)
\bibitem{A4}
C.H. Yan, Discrete Mathematics, 181, 295 (1998) 
\bibitem{pro}
M.S. Coxeter, {\em Projective geometry},  (Springer, Berlin, 2003)
\bibitem{B1}
P.R. Halmos, `Lectures on Boolean Algebras\rq{} (Springer, New York, 1963)
\bibitem{B2}
R. Sikorski, `Boolean Algebras\rq{}  (Springer, New York, 1969)
\bibitem{S1}
S.Y.H. Su, A.A. Sarris, IEEE Trans. Comp. C-21, 479 (1970)
\bibitem{S2}
A. Muthukrishman, C.R. Stroud, Phys. Rev. A62, 052309 (2000)
\bibitem{S3}
B.P. Lanyon, et al, Nature Phys. 5, 134 (2009)
\bibitem{OZ}
H. Ollivier, A. Zurek, Phys. Rev. Lett. 88, 017901 (2002)
\bibitem{Rau}
A.R.P. Rau, Phys. Rev. A79, 042323 (2009)

\end{thebibliography}
\end{document}